\documentclass[12pt]{article}
\usepackage{amsmath,amsthm}
\usepackage{url}
\usepackage{slanting}
\usepackage{graphicx}

\renewcommand{\(}{\left(}
\renewcommand{\)}{\right)}
\newcommand{\E}{{\mathbf{E}}}

\newcommand{\A}{\mathcal{A}}
\newcommand{\C}{\mathcal{C}}
\newcommand{\K}{\mathcal{K}}

\newcommand{\col}{\mathrm{colour}}

\newtheorem{theorem}{Theorem}
\newtheorem{prop}[theorem]{Property}
\newtheorem{prb}[theorem]{Problem}
\theoremstyle{definition}
\newtheorem*{dnt}{Definition}
\newtheorem{exm}[theorem]{Example}
\newtheorem{alg}[theorem]{Algorithm}

\author{%
Wojciech Kordecki\\
Department of Computer Science \\
Faculty of Technical and Economic Science \\
The Witelon State University of Applied Sciences in Legnica \\
e-mail: wojciech.kordecki@pwsz-legnica.eu
\and Anna {\L}yczkowska-Han{\'c}kowiak \\
Faculty of Informatics and Electronic Economy \\
Pozna{\'n} University of Economics \\
e-mail: anna.lyczkowska-hanckowiak@ae.poznan.pl}

\title{Greedy online colouring with buffering}

%
%

\begin{document}

\maketitle

\begin{abstract}
We consider the problem of online graph colouring.  Whenever a node is requested, a colour   must be assigned to the node, and this colour must be different from the colours of any of its neighbours. According to the greedy algorithm the node is coloured by the colour with the smallest possible $k$. 

The goal is to use as few colours as possible. We propose an algorithm, where the node is coloured not immediately, but only after the collection of next requests stored in the buffer of size~$j$. In other words, the first node in the buffer is coloured definitively taking into account all possible colourisations of the remaining nodes in the buffer. If there are $r$ possible corrected colourings, then the one with the probability $1/r$ is chosen. The first coloured  node is removed from the buffer to enable the entrance of the next request. A number of colours in a two examples of graphs: crown graphs and Kneser graphs have been analysed.

\noindent
\smallskip
\textbf{Keywords:} online colouring, greedy algorithm. \\
\textbf{2010 Mathematics Subject Classification:} 05C15, 05C85.
\end{abstract}

\section{Introduction}
\label{s:intro}

An online colouring of a graph $G$ is the one assigned to $G$ by colouring its vertices in some order 
\[
v_1,v_2\dots,v_n\,.
\]
The colour of $v_i$ is assigned by only looking at the subgraph
of $G$ induced by the set $\{v_1,v_2,\dots,v_i\}$, and the assigned colour of $v_i$ is never changed.  
Greedy colouring is a colouring of the vertices of a graph formed by a greedy algorithm that considers the vertices of the graph in sequence and assigns to each vertex its first available colour $k$. Another name used for the such an algorithm is \textit{First Fit} one.
Of course, greedy colourings do not generally use the minimum number of colours possible. 

The unpublished review paper by Miller \cite{Miller04onlinegraph} contains introductory information the comprehensive form and presents the main problems considered in this paper.
In \cite{Miller04onlinegraph}, Miller introduces the problem formally and he defines a performance metric to evaluate the success of an online colouring algorithm. 
He points out that online (greedy, first fit)  algorithm performs very well in the cases where the input graph belongs to a certain class of graphs.

Nevertheless, in many cases such an algorithm works very badly. The best known example of such a graph is a crown graph (see \cite{Miller04onlinegraph}). 
The algorithm with buffering presented in Section~\ref{s:buffering} essentially improves the effectiveness of colouring in the worst case, even for the buffer of a very small size.

Let $\A$ be an online algorithm used for colouring the graph $G$. Denote
\begin{itemize}
\item 
$\chi\(G\)$ -- chromatic number of $G$,
\item 
$\chi_{\A}\(G\)$ -- the maximum number of used colours for each possible ordering
of the vertices (the worst-case).
\end{itemize}
The performance ratio of an online graph colouring algorithm $\A$ for a class of graphs $\C$ is defined as
\begin{equation}\label{eq:perf_ratio}
\rho\(G\)=\max_{G\in\C}\left\{\frac{\chi_{\A}\(G\)}{\chi\(G\)}\right\}.
\end{equation}

Follow\cite{Miller04onlinegraph}~, we present two theorems by Halld{\'o}rsson and Szegedy: \cite{Halldorsson:OnLineColoringHyper},
\cite{HaSze:Lower} and
\cite{Halldorsson:OnlineColoringKnown}
\begin{theorem}
The performance ratio of any deterministic online colouring algorithm is at least 
\[
\frac{2n}{\ln^2n}\,.
\]
\end{theorem}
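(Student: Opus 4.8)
The statement is a worst-case lower bound, so the route I would take is an \emph{adaptive adversary argument}. Fix a deterministic online algorithm $\A$. For each value of $n$ one describes a family $\C$ of $n$-vertex graphs together with a strategy by which an adversary, internally simulating $\A$, reveals the vertices of some $G\in\C$ one at a time, each choice depending on the colours $\A$ has already irrevocably assigned, so that in the end $\A$ has spent at least roughly $2n/\ln n$ colours while $\chi(G)$ is only of order $\ln n$. Because $\A$ is deterministic the adversary can run it as a subroutine, so the whole interaction collapses to one worst ordering of one graph of $\C$ --- which is exactly what $\chi_{\A}(G)$, and hence $\rho$ in \eqref{eq:perf_ratio}, ranges over; the resulting quotient is $2n/\ln^2 n$.

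The engine of the argument is a \emph{logarithmic building block}: a forest $F_m$ on $m$ vertices for which \emph{no} online algorithm --- not merely First Fit --- can avoid using at least about $\log_2 m$ colours, even though $\chi(F_m)=2$. This is the Gy\'arf\'as--Lehel phenomenon: the adversary grows a collection of vertex-disjoint subtrees, maintaining the invariant that each one has already been forced to display every colour $1,\dots,t$ used so far, and then welds pairs of them together with single new edges to push the count to $t+1$, which roughly doubles the number of vertices consumed. The point to stress --- and the reason the eventual bound is polylogarithmic rather than, as with the crown graph, merely an artefact of First Fit --- is that $F_m$ defeats \emph{every} deterministic strategy, so its forcing power survives being embedded inside a larger adaptive construction.

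The substance of the proof is then a \emph{recursive amplification} that turns this logarithmic bound into a near-linear one. Starting from $G_1=F_m$ one forms $G_i$ from $G_{i-1}$ by substituting a private copy of $G_{i-1}$ for every vertex of a suitable building block and connecting the copies that correspond to adjacent vertices, the connections being arranged so that a legal colouring may recycle palettes among copies --- keeping $\chi(G_i)$ growing only slowly --- while the adversary, by choosing the presentation order carefully, can still drive $\A$ onto essentially disjoint palettes on a large sub-collection of the copies and thereby multiply the number of colours $\A$ is forced to use. Carrying along both $f_i$, the count of forced colours, and $\chi_i=\chi(G_i)$, one proves that $f_i$ grows by a factor tied to the logarithm of the current block size whereas $\chi_i$ grows far more slowly, and then one balances the block sizes and the recursion depth against the vertex budget $n$; optimised, this yields an $n$-vertex graph with $\chi=\Theta(\ln n)$ on which every deterministic online algorithm must use at least $2n/\ln n$ colours, which is the claimed bound. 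The real obstacle, I expect, is forcing these two invariants to coexist: one must design a graph operation that raises the forced palette by strictly more than it raises the chromatic number, and, within the induction, establish the freshness of the colours forced inside the individual copies against the possibility that $\A$ reuses colours among non-adjacent copies --- which is exactly where the \emph{structural} strength of $F_m$, namely that it forces an antichain of colour sets along its tree and not merely a large number of colours, has to be isolated and propagated.
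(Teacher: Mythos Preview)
The paper does not prove this theorem. It is quoted, without proof, as a result of Halld{\'o}rsson and Szegedy (the citations \cite{Halldorsson:OnLineColoringHyper}, \cite{HaSze:Lower}, \cite{Halldorsson:OnlineColoringKnown}), and serves only as background motivation for the buffered algorithm introduced later. There is consequently nothing in the paper to compare your proposal against.

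On the substance of what you wrote: the outline is in the right spirit for this kind of lower bound --- a deterministic adversary, a logarithmic forcing gadget of Gy{\'a}rf{\'a}s--Lehel type, and an amplification step that trades vertex count for forced palette size faster than for chromatic number. But as it stands it is a programme, not a proof. The decisive content you have deferred is exactly what you name in your last paragraph: a concrete graph operation, explicit recurrences for the forced-colour count $f_i$, the chromatic number $\chi_i$, and the vertex count $n_i$, and then the optimisation showing that $f_i/\chi_i$ actually attains $2n/\ln^2 n$ rather than merely $n/\mathrm{polylog}\,n$ with unspecified constant and exponent. None of that can be checked from your sketch, and none of it is supplied by the present paper; to fill it in you must go to the original Halld{\'o}rsson--Szegedy sources.
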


\begin{theorem}
The expected performance ratio of any randomised online colouring algorithm is at least 
\[
\frac{n}{16\ln^2n}\,.
\]
\end{theorem}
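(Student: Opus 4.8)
The argument I would give has two layers: a reduction, via Yao's minimax principle, from randomised algorithms against an adaptive adversary to deterministic algorithms against a fixed input distribution; and then the combinatorial construction of that distribution. I would start with the reduction. For a randomised online algorithm $\A$, $\chi_{\A}(G)$ is already a worst case over presentation orders, so the expected performance ratio of the best randomised algorithm is $\inf_{\A}\max_{G,\sigma}\mathbf{E}\!\left[\chi_{\A}(G,\sigma)/\chi(G)\right]$, the infimum running over randomised online algorithms and the maximum over pairs (graph on $n$ vertices, ordering $\sigma$). By Yao's principle this is bounded below by $\sup_{\mathcal{D}}\inf_{\A'}\mathbf{E}_{(G,\sigma)\sim\mathcal{D}}\!\left[\chi_{\A'}(G,\sigma)/\chi(G)\right]$, now with $\mathcal{D}$ a distribution on such pairs and $\A'$ ranging over \emph{deterministic} online algorithms. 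Hence it suffices to exhibit one distribution $\mathcal{D}$ on which every deterministic online algorithm has expected ratio at least $n/(16\ln^2 n)$.

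The second layer is to build $\mathcal{D}$ from a recursively amplified graph whose two crucial properties are: (i) every graph in the support of $\mathcal{D}$ is properly colourable with $O(\ln n)$ colours, certified by an explicit colouring assembled level by level from colourings of its sub-copies; and (ii) presented in the random order drawn from $\mathcal{D}$, the graph forces any deterministic online algorithm to use $\Omega(n/\ln n)$ distinct colours in expectation. Property (ii) I would obtain by iterating the standard device for online-colouring lower bounds: the graph is grown in rounds, and in each round a bounded-size set of ``joining'' vertices is introduced whose neighbourhoods among the already-presented vertices are drawn from a small fixed catalogue of patterns, so that --- whatever colour classes the algorithm has formed --- with constant probability two previously built blocks turn out to carry disjoint colour sets and a fresh colour becomes unavoidable at the join. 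Summing this gain over the rounds, with the round sizes chosen so that there are $\Theta(\ln n)$ rounds while the count of forced colours grows geometrically, yields $\Omega(n/\ln n)$ colours against a graph of chromatic number $O(\ln n)$, i.e.\ an expected ratio $\Omega(n/\ln^2 n)$. The absolute constant is then pinned down by (a) the loss in Yao's step due to randomising over the strategy tree of the inherently \emph{adaptive} adversary rather than over a truly oblivious one, and (b) the base change from binary logarithms, in which the ``doubling'' estimates are most natural, to the natural logarithm of the statement; together these account for the factor $16$.

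The step I expect to be the main obstacle is securing (i) and (ii) \emph{simultaneously}. The colour-forcing mechanism of (ii) wants the sub-blocks to behave independently, which pushes toward inserting many edges between blocks, whereas each such edge is a threat to the chromatic number; keeping $\chi = O(\ln n)$ forces the inter-block connections to be both sparse and highly structured, and demands an explicit global $O(\ln n)$-colouring witnessing it. Threading this needle --- enough connectivity to force colours, little enough to stay colourable --- is the technical heart, and it is also where the value of the constant is really decided, so a careful bookkeeping is needed to get not merely $\Omega(n/\ln^2 n)$ but at least $n/(16\ln^2 n)$. A secondary point is checking that the per-round random patterns are genuinely oblivious to the algorithm's decisions, so that Yao's principle applies to the single distribution $\mathcal{D}$ and not to an adaptive adversary.
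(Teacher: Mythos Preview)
The paper does not prove this theorem at all: it is stated without proof and attributed to Halld{\'o}rsson and Szegedy via the citations \cite{Halldorsson:OnLineColoringHyper}, \cite{HaSze:Lower}, \cite{Halldorsson:OnlineColoringKnown}. There is therefore no ``paper's own proof'' to compare your proposal against; the authors simply quote the bound as background for their buffering discussion.

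What you have written is a sketch of the Halld{\'o}rsson--Szegedy argument itself, which goes well beyond anything in this paper. As a sketch of that external result it is broadly on the right track --- Yao's principle plus a hard input distribution with small chromatic number but large online cost is indeed the skeleton --- but several points are loose. The recursive ``amplified graph'' with geometric growth of forced colours is not quite how the original construction runs; the hard distribution there is built from a random \emph{two\nobreakdash-colourable} (bipartite) structure presented so that the partition is information\nobreakdash-theoretically hidden, rather than from a tower of blocks whose chromatic number climbs to $\Theta(\ln n)$. Your accounting for the constant $16$ (Yao loss plus base change of logarithms) is speculative and would not survive a careful derivation. If you intend to include an actual proof, you should follow the cited papers directly rather than this outline; if you intend only to match the present paper, a one\nobreakdash-line citation is all that is required.
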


It is known that for any bipartite graph on $n$ vertices and any deterministic algorithm at least
\[
1.13747\cdot  \log_2 n-0.49887 
\]
colours are needed: Bianchi et al.~\cite{BBHK:bipartite}. 

Lov{\'a}sz, Saks, Trotter \cite{LST:on-line} prove
(see also Kierstead and Trotter \cite{KiersteatTrotter:On-line}, Bianchi~et~al.~\cite{BBHK:bipartite}):
\begin{theorem}\label{thm:LST:bipartite}
For any bipartite graph on $n$ vertices there exists an online algorithm using at most $2\log^*_2 n\(o\(1\)+1\)$ colours. 
\end{theorem}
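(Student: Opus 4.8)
The plan is to follow the Lov{\'a}sz--Saks--Trotter construction: produce one explicit online algorithm $\A$ and then bound separately (i)~its correctness, which will be essentially automatic, and (ii)~the number of colours it can ever use, which should come out to $2\log_2 n\(1+o(1)\)$. I read $\log_2 n$ for $\log^*_2 n$ in the statement, since an $O(\log^* n)$ bound would contradict the $\Omega(\log_2 n)$ lower bound of Bianchi et~al.\ recalled above. The whole construction leans on bipartiteness through the fact that a connected bipartite graph has a unique proper $2$-colouring up to swapping its two classes.

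First I would take the colour set to be the pairs $\(\ell,\varepsilon\)$ with $\ell\in\{1,2,\dots\}$ a \emph{level} and $\varepsilon\in\{0,1\}$ a \emph{side}, and maintain, as the graph $G_t$ is revealed, a proper $2$-colouring of each connected component of $G_t$ by the side bits. When the vertex $v_t$ arrives with already-coloured neighbourhood $N$: if $N=\emptyset$, set $\col(v_t)=\(1,0\)$; otherwise $N$ meets some components $C_1,\dots,C_m$ of $G_{t-1}$ which $v_t$ fuses into one. If the side bits already committed on $C_1,\dots,C_m$ can be reconciled into a single admissible side for $v_t$ (i.e.\ one that is opposite to all of $N$ simultaneously), place $v_t$ there on the least level currently present on that side; if not --- an \emph{incompatible merge} --- open a fresh level and promote $v_t$, together with a suitably chosen portion of the merging components, to it, updating the bookkeeping so that the active bipartition of the fused component remains recoverable \emph{without ever re-colouring a committed vertex}. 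Making that last clause precise --- choosing exactly how much structure to promote on an incompatible merge, and keeping the analysis-level (which drives the counting below) coupled to, yet distinct from, the committed colour so that nothing is re-coloured --- is the delicate design step, and the one I expect to be the main obstacle.

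Granting such an $\A$, correctness is immediate: each rule hands $v_t$ a colour distinct from all colours currently present in $N$, and every later neighbour of $v_t$ will likewise avoid $\col(v_t)$, so no monochromatic edge is ever created. Hence all the content is in bounding the level, and the heart of that is a counting lemma of the form: \emph{whenever level $\ell$ is ever used, at least $2^{\ell-1}$ vertices have been presented.} I would get this by induction on $\ell$ from a suitable invariant --- morally ``every connected component of the subgraph of $G_t$ spanned by the vertices of level $\ge\ell$ has $\ge 2^{\ell-1}$ vertices'', though the exact invariant has to be tuned to the promotion rule above. The inductive step is the crucial one: an incompatible merge that opens level $\ell$ must, because $G$ has no odd cycle, fuse two previously \emph{distinct} level-$(\ell-1)$ blocks (a conflict cannot arise within one block, whose committed sides are consistent), each of which has $\ge 2^{\ell-2}$ vertices by the inductive hypothesis, so the resulting level-$\ell$ block has $\ge 2^{\ell-1}$ vertices. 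Pinning down the invariant so that this argument is valid \emph{and} sits consistently on top of the ``no re-colouring'' bookkeeping is where essentially all the work lies.

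Finally, since $|V(G)|=n$, the counting lemma forces $2^{\ell-1}\le n$ for every level $\ell$ that occurs, so only levels $1,2,\dots,1+\lfloor\log_2 n\rfloor$ are ever used; with two sides per level this bounds the number of colours by
\[
2\(1+\lfloor\log_2 n\rfloor\)\le 2\log_2 n+2=2\log_2 n\,\(1+o(1)\),
\]
which is the claim. To summarise: the correctness check and this last counting are routine; the real work is the level/side bookkeeping of $\A$ and the careful choice of inductive invariant that makes the $2^{\ell-1}$ lemma go through.
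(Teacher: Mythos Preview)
The paper does not prove this theorem at all: it is quoted from the literature (Lov{\'a}sz--Saks--Trotter, with pointers also to Kierstead--Trotter and Bianchi~et~al.) and followed only by a definition of $\log^*_2$, with no argument. So there is nothing to compare your proposal against in the paper itself.

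That said, your sketch is the standard Lov{\'a}sz--Saks--Trotter strategy and is on the right track, and your correction of $\log^*_2 n$ to $\log_2 n$ is well taken: the paper's stated bound is indeed inconsistent with the $\Omega(\log_2 n)$ lower bound it quotes just above, and the actual Lov{\'a}sz--Saks--Trotter result is $O(\log n)$ colours, not $O(\log^* n)$. Your level/side colouring with the doubling invariant (each level-$\ell$ block has at least $2^{\ell-1}$ vertices, forced because an incompatible merge fuses two distinct blocks of the previous level) is exactly the mechanism in the original proof. The one place your write-up is still genuinely soft is the promotion rule on an incompatible merge: you flag it yourself, and it does need to be made precise so that (a)~already-committed colours are never changed and (b)~the doubling invariant survives; in the original this is handled by letting the \emph{analysis} level of a component increase while the vertices keep their committed colours, which is the distinction you allude to but do not spell out. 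Once that bookkeeping is fixed, the rest of your outline goes through and yields $2\lfloor\log_2 n\rfloor+2$ colours.
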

Binary iterated logarithm $\log^*_2$ is the number of times the logarithm function must be iteratively applied before the result is less or equal to 1, i.e.
$\log^*_2 n=k$ where $k$ is the smallest number for which $k$ times iterated logarithm of $n$ is at most 1: 
\[
\underbrace{\log_2\dots\log_2 n}_{k\text{ times}}\leq 1.
\]
It is essential to know that arriving vertices are from the bipartite graph.

Colouring online has many real applications. See for example Bartal et al.~\cite{BartalFiatLeonardi:lowerbounds} and Zang et al. \cite{Zang_et_al}.
The recent articles, e.g. Bianchi et al.~\cite{BBHK:bipartite} and Christ et al.~\cite{Christ_et_al:Online} present an another look at the problem of optimising a number of colours using the so called ``bit advice''. In~\cite{Christ_et_al:Online} one can also find an interesting application of this method to the cellular networks.

The next step to the reality is allowing that the vertices can be not only coloured but they can also be discoloured (see Borowiecki and Sidorowicz~\cite{BorowieckiSidorowicz:dynamic}). 
Dynamic graph colourings can be naturally applied in system modeling, e.g. for scheduling threads of parallel programs, time sharing in wireless networks, session scheduling in high-speed LAN's, channel assignment in WDM optical networks as well as traffic scheduling.

In this paper we focus our attention on the case when we now know not only the present arriving vertex but we also know in advance the vertices which will arrive in the next several moments. In Section~\ref{s:buffering} we present an algorithm and analyse two particular known classes of graphs: crown graphs and Kneser graphs. In Section~\ref{s:simul} we present some numerical results obtained by computer simulations.

\section{Colouring with buffering}
\label{s:buffering}

The problem of online colouring with buffering is known as \textit{lookahead} and has been considered in the case of $d$-inductive graphs in Irani~\cite{Irani:inductive} (the review in Miller~\cite{Miller04onlinegraph}), 
also Halld{\'o}rsson~\cite{Halldorsson:OnLineColoringHyper} for hypergraphs.

The \textit{$d$-inductive graph} is a graph with a numbered sequence of vertices in such a way that every vertex is joined with the vertex with a maximal number by at most $d$ edges.
Irani showed that the greedy algorithm uses $O\(d\log n\)$ colours on $G$ if $G$ belongs
to the class of $d$-inductive graphs \cite{Irani:inductive}. Thus the performance ratio of the greedy algorithm on chordal and planar graphs is bounded above by $O\(\log n\)$.

\begin{alg}(online colouring with buffering).
\label{alg:buffer}

\noindent
Let $B\subset V$ be the buffer of size $b$.
\begin{enumerate}
\item 
Fix a maximal size of the buffer $B$ as $b\geq 1$. 
\item 
Let $V_c=\emptyset$ be the set of already coloured vertices and set $B=\emptyset$.
\item 
Colour the first vertex by the colour 1 and move it to $V_c$.
\label{alg:fill}\item 
Fill the buffer by subsequent vertices as a queue FIFO until the buffer is full.
\item\label{alg:colour} 
If the buffer is full, colour the vertices in the queue properly (including $V_c$) using colours of the minimal values.
\item\label{alg:choose}
Among all possible colourings of the buffer choose only such ones  whose subsequent colours from the biggest one to the smallest one are minimal.
\item 
If such possible colourings are $r$, choose one with the probability $1/r$.
\item 
Colourings of all the vertices in the buffer except the first vertex is temporary. At the moment when the next vertex arrives, move the first one to $V_c$ and repeat the procedure of colouring.
\end{enumerate}
\end{alg}

\section{Analysis of special cases}
\label{s:special}

\subsection{Crown graph}
\label{ss:crown}

\begin{dnt}
A crown graph $\C_n=\(V,E\)$ on $2n$,  vertices is an undirected graph with two sets of vertices, $V=V_1\cup V_2$ with an edge from $v_{1,i}$ to $v_{2,j}$ whenever $i\neq j$. The crown graph can be viewed as a complete bipartite graph from which the edges of a perfect matching have been removed.
\[
V_k=\left\{v_{k,1},v_{k,2},\dots,v_{k,n}\right\}
\]
and
\[
\(u,w\)\in E\iff u=v_{1,i}, w=v_{2,j}, i\neq j.
\]
\end{dnt}

Crown graphs can be used to show that greedy colouring algorithms behave badly in the worst case: if the vertices of a crown graph are presented to the algorithm in the order $u_0$, $v_0$, $u_1$, $v_1$, etc., then a greedy colouring uses $n$ colours, whereas the optimal number of colours is two. This construction is attributed to Johnson \cite{Johnson:Worst}; crown graphs are sometimes called Johnson's graphs with notation $J_n$. F{\"u}rer \cite{Furer:Improved} uses crown graphs as part of a construction showing hardness of approximation of colouring problems.

Let us denote
\begin{align*}
V_1&=\left\{v_{1,1},v_{1,2},\dots,v_{1,n}\right\} \\
V_2&=\left\{v_{2,1},v_{2,2},\dots,v_{2,n}\right\}
\end{align*}

Let 
$\C_n$ have an linear order if
\[
V=\(v_{1,1},v_{1,2},\dots,v_{1,n},v_{2,1},v_{2,2},\dots,v_{2,n}\).
\]
and
$\C_n$ have an alternate order if 
\[
V=\(v_{1,1},v_{2,1},v_{1,2},v_{2,2},\dots,v_{1,n},v_{2,n}\).
\]

The following theorems show, how the size of the buffer affects the performance ratio.
\begin{theorem}\label{thm:CRn2_E}
If $\C_n$, $n\geq 2$ has the alternate order, then for Algorithm~\ref{alg:buffer} with  $b=2$ we have
\begin{equation}\label{eq:CRn2_E}
\E C_n=3-\frac{1}{2^n}\,.
\end{equation}
\end{theorem}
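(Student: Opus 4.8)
The plan is to track the state of the buffer $B$ (which always holds exactly two vertices once it is full) together with the set $V_c$ of already-coloured vertices, and to see how the random choice in step~\ref{alg:choose} of Algorithm~\ref{alg:buffer} propagates. Under the alternate order the buffer, after the initial fill, always contains one vertex $v_{1,i}$ and one vertex $v_{2,i}$ with the \emph{same} second index $i$; these two are non-adjacent in $\C_n$ (the matching edges are exactly the missing ones), so they may legally receive the same colour or different colours. The first vertex to be committed is $v_{1,i}$ (the older one in the FIFO queue), and its colour depends on which colours already appear on its neighbours in $V_c$, i.e.\ on all $v_{2,j}$ with $j<i$ that have been permanently coloured. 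So the natural approach is to set up a small Markov chain on the ``colour pattern so far'' and compute $\E C_n$ by conditioning.

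**Identifying the recursion.** First I would argue by induction that at every stage the permanently coloured vertices use only colours from $\{1,2,3\}$, and more precisely that the set of colours appearing on the already-committed $V_2$-vertices is always either $\{1\}$, $\{2\}$, $\{1,2\}$, $\{1,3\}$, or $\{2,3\}$ — the key point being that greedy minimality (steps~\ref{alg:colour}--\ref{alg:choose}) forbids introducing a colour larger than necessary. Then, when $v_{1,i}$ is committed, its forced colour is the smallest value not used by those committed $V_2$-vertices, and simultaneously the buffered partner $v_{2,i}$ gets a tentative colour; the randomisation picks uniformly among the $r$ lexicographically-minimal completions. The crucial computation is: with what probability does committing the pair $(v_{1,i}, v_{2,i})$ \emph{increase} the running maximum colour from $2$ to $3$, and once it has reached $3$, it never drops. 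I expect the per-step probability of a "bad" event (forcing colour $3$) to be exactly $1/2$ at each of the $n$ steps when the current maximum is still $2$, which is what produces the $1/2^n$ term: $\E C_n = 3\cdot\Pr[\text{ever reach }3] + 2\cdot\Pr[\text{never reach }3] = 3(1-2^{-n}) + 2\cdot 2^{-n} = 3 - 2^{-n}$.

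**Carrying it out.** Concretely I would (i) describe the buffer state space — essentially "current maximal colour used $\in\{2,3\}$'' together with which of the low colours are currently 'blocked' by $V_2$-vertices — as a $2$- or $3$-state chain; (ii) compute the transition probabilities from the rule that among the $r$ lexicographically-minimal proper recolourings one is chosen with probability $1/r$, checking the small cases $i=1,2$ by hand to anchor the induction and handle the $n\ge 2$ hypothesis; (iii) observe that $3$ is absorbing and that from the "max $=2$'' states the chain moves to the absorbing state with probability $1/2$ and otherwise stays; (iv) conclude that after $n$ committed pairs the probability of never having hit $3$ is $2^{-n}$, giving \eqref{eq:CRn2_E}.

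**Main obstacle.** The hard part will be step~(ii): pinning down exactly what ``all possible colourings of the buffer whose colours from the biggest to the smallest are minimal'' yields for each buffer configuration, i.e.\ correctly enumerating the set of lexicographically-minimal completions and its size $r$, and verifying that this enumeration is consistent with the inductive hypothesis on the colour set of committed $V_2$-vertices. In particular one must check that a tentative colour $3$ assigned to $v_{2,i}$ inside the buffer can still be ``repaired'' down to $1$ or $2$ at the next step with the claimed probability, so that the only genuinely irreversible event is a \emph{committed} colour $3$ on a $V_1$-vertex. Once the one-step behaviour is nailed down, the rest is a routine geometric-series bookkeeping.
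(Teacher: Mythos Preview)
Your proposal contains a genuine gap: the inductive claim that the permanently coloured vertices use only colours from $\{1,2,3\}$, so that $C_n\in\{2,3\}$, is false. In the alternate order, once the first $i$ pairs have each received a common colour (pair $j$ getting colour $j$ for $j=1,\dots,i$), the next buffer $\{v_{2,i},v_{1,i+1}\}$ has exactly the two minimal colourings $(i,i+1)$ and $(i+1,i)$; with probability $1/2$ the committed vertex $v_{2,i}$ again matches $v_{1,i}$ and the colour count continues to climb. Hence $C_n$ ranges over all of $\{2,\dots,n\}$, and with probability $2^{-(n-2)}$ all $n$ colours are used. Your two-state chain with absorbing state ``maximum colour $=3$'' therefore does not model the algorithm, and the list $\{1\},\{2\},\{1,2\},\{1,3\},\{2,3\}$ of possible $V_2$-colour sets is not exhaustive.

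The paper's proof rests on precisely the mechanism above: the number of colours increases by one at each level as long as the left and right vertices of that level receive the same colour, which happens with probability $1/2$ at every level conditional on all earlier levels being monochromatic; once a pair differs the colouring stabilises. This yields $\Pr(C_n=k)=2^{-(k-1)}$ for $2\le k<n$ and $\Pr(C_n=n)=2^{-(n-2)}$, and the expectation is then obtained by direct summation. So the correct picture is a geometric stopping time whose value records how many colours have accumulated, not a binary ``hit $3$ or not'' chain; the identity $3(1-2^{-n})+2\cdot 2^{-n}=3-2^{-n}$ matches the stated formula only because you targeted the answer, not because the underlying distribution of $C_n$ is the one you describe.
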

\begin{proof}
At the every level a number of used colours is increased by 1, if the right and left vertex have the same colour.
Such a situation can occur with the probability $1/2$ under condition that at every lower level the left and right vertices obtained the same colours.
If at the left and right vertices at the lower level have the different colours then the number of colours does not increase.
Therefore
\begin{equation}\label{eq:CRn2_p}
\Pr\(C_n=k\)=
\begin{cases}
\frac{1}{2^{k-1}} & \text{dla $1<k<n$,} \\
\frac{1}{2^{k-2}} & \text{dla $k=n$.}
\end{cases}
\end{equation}
Hence
\[
\E C=\sum_{k=2}^{n-1}\frac{k}{2^{k-1}}+\frac{n}{2^{n-2}}=3-\frac{1}{2^n}\,,
\]
which proved the formula~\eqref{eq:CRn2_E}.
\end{proof}

\begin{theorem}\label{thm:CRn2_P}
If $\C_n$, $n\geq 2$ has the alternate order, then for Algorithm~\ref{alg:buffer} with  $b=2$ and $m<n$ we have
\begin{equation}\label{eq:CRn2_P}
\Pr\(C_n\geq m\)=\frac{1}{2^{m-2}}\,.
\end{equation}
\end{theorem}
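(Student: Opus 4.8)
The plan is to derive the tail probability $\Pr(C_n \geq m)$ directly from the pointwise distribution already established in the proof of Theorem~\ref{thm:CRn2_E}, namely equation~\eqref{eq:CRn2_p}. Since $C_n \geq m$ is the disjoint union of the events $\{C_n = k\}$ for $k = m, m+1, \dots, n$, I would simply sum the relevant probabilities. Using \eqref{eq:CRn2_p}, for $m < n$ we have
\[
\Pr(C_n \geq m) = \sum_{k=m}^{n-1} \frac{1}{2^{k-1}} + \frac{1}{2^{n-2}}.
\]

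The remaining work is the geometric-series computation. The finite sum $\sum_{k=m}^{n-1} 2^{-(k-1)}$ telescopes to $2^{-(m-2)} - 2^{-(n-2)}$, and adding the final term $2^{-(n-2)}$ cancels the negative contribution, leaving exactly $2^{-(m-2)} = 1/2^{m-2}$, which is \eqref{eq:CRn2_P}. I would present this as a one- or two-line calculation.

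Alternatively, and perhaps more cleanly, I would argue directly from the probabilistic mechanism described in Theorem~\ref{thm:CRn2_E}: the number of colours reaches at least $m$ precisely when, at each of the first $m-2$ relevant levels, the left and right vertices of the buffer were forced to repeat a colour, an event of probability $1/2$ at each independent level (conditioned on all lower levels having collided). Hence $\Pr(C_n \geq m)$ is the probability of $m-2$ consecutive ``collisions,'' which is $(1/2)^{m-2}$. This recovers \eqref{eq:CRn2_P} without invoking the boundary correction at level $n$, since reaching level $m$ with $m < n$ does not yet involve the last level.

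I do not anticipate a serious obstacle here: the statement is essentially a corollary of \eqref{eq:CRn2_p}. The only point requiring a little care is the boundary term — making sure the exceptional weight $1/2^{k-2}$ at $k = n$ is handled correctly so that the telescoping sum closes to give the clean answer $1/2^{m-2}$ rather than something with a leftover $1/2^{n}$-type term. Checking the small case $m = 2$ (where the formula gives $\Pr(C_n \geq 2) = 1$, as it must, since at least two colours are always used) is a useful sanity check to include.
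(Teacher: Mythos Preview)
Your proposal is correct and follows essentially the same route as the paper: both derive the tail probability by summing the pointwise distribution \eqref{eq:CRn2_p} as $\sum_{k=m}^{n-1} 2^{-(k-1)} + 2^{-(n-2)}$ and then collapse the geometric sum to $2^{-(m-2)}$. Your alternative direct ``$m-2$ consecutive collisions'' argument and the $m=2$ sanity check are nice additions not present in the paper, but the core proof is the same.
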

\begin{proof}
Formula~\eqref{eq:CRn2_P} follows immediately from \eqref{eq:CRn2_p} in the proof of Theorem~\ref{thm:CRn2_E}
\[
\Pr\(C_n\geq m\)=\sum_{k=m}^{n-1}\frac{1}{2^{k-1}}+\frac{1}{2^{n-2}}
\]
Since
\[
\sum_{k=m}^{n-1}\frac{1}{2^{k-1}}=\frac{1}{2^{m-2}}\(1-\frac{1}{2^{n-m}}\),
\]
\[
\sum_{k=m}^{n-1}\frac{1}{2^{k-1}}+\frac{1}{2^{n-2}}=\frac{1}{2^{m-2}}\,.
\]
\end{proof}

\begin{prop}\label{thm:CRn3}
Buffer of the size $b=3$ does not decrease the number of colours relatively to the buffer of the size $b=2$ for $\C_n$ with an alternate order.
\end{prop}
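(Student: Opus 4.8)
The plan is to show that when the buffer has size $b=3$, the additional lookahead it provides is "wasted" on the crown graph with alternate order, in the sense that the colour eventually assigned to each vertex $v_{1,i}$ or $v_{2,i}$ is governed by exactly the same local rule as in the $b=2$ case. The key structural observation is that in the alternate order $v_{1,1},v_{2,1},v_{1,2},v_{2,2},\dots$, a vertex $v_{1,i}$ (respectively $v_{2,i}$) is non-adjacent only to its partner $v_{2,i}$ (resp. $v_{1,i}$), and is adjacent to every other already-arrived vertex of the opposite side. Hence at the moment $v_{1,i}$ is the first vertex in the buffer, the three vertices in the buffer are $v_{1,i},v_{2,i},v_{1,i+1}$ (or a truncation near the end), and the only freedom the algorithm has in "properly colouring the buffer including $V_c$ with minimal colours" is whether $v_{1,i}$ and $v_{2,i}$ receive the same colour or not — the presence of $v_{1,i+1}$ in the buffer imposes no extra constraint on $v_{1,i}$, because $v_{1,i+1}$ is adjacent to $v_{2,i}$ but not to $v_{1,i}$, so its colour choices never force a particular colour on $v_{1,i}$.

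First I would make precise the "state" seen by the algorithm after $v_{1,1},v_{2,1},\dots,v_{1,i-1},v_{2,i-1}$ have been moved to $V_c$: by induction the set of colours used so far on each side is an interval $\{1,\dots,c\}$ for some $c$, and the relevant data is just $c$ together with which of $v_{1,i-1},v_{2,i-1}$ carries the "new" top colour (if one was created at that level). Next I would enumerate, for the $b=3$ buffer $\{v_{1,i},v_{2,i},v_{1,i+1}\}$, all proper colourings that are minimal in the lexicographic-from-the-top sense demanded in steps~\ref{alg:colour}--\ref{alg:choose} of Algorithm~\ref{alg:buffer}, and check that after projecting to the colour of $v_{1,i}$ alone (the only vertex that gets committed) the induced distribution over $\{v_{1,i}\text{'s colour}\}$ is identical to the one obtained with $b=2$, namely: with probability $1/2$ a fresh colour is opened exactly when forced by the colour-equality event described in the proof of Theorem~\ref{thm:CRn2_E}, and otherwise the smallest colour avoiding the neighbours is reused. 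The same check is then repeated with the buffer starting at $v_{2,i}$, and separately for the truncated buffers near the last index, which only affect the top level and reproduce the $k=n$ case of~\eqref{eq:CRn2_p}.

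Concluding, since the committed colour of every vertex has, conditioned on the history, exactly the distribution it would have under the $b=2$ algorithm, the whole random process $\{C_n \text{ with } b=3\}$ has the same law as $\{C_n \text{ with } b=2\}$; in particular the number of colours is not decreased (indeed it is distributed identically, so \eqref{eq:CRn2_p}, \eqref{eq:CRn2_E} and \eqref{eq:CRn2_P} all carry over verbatim). The main obstacle I anticipate is the bookkeeping in step two: one must argue carefully that the "minimal from the biggest colour to the smallest" tie-breaking rule in step~\ref{alg:choose}, applied to the three-vertex buffer, never exploits the third vertex to suppress a colour on $v_{1,i}$ that would have been legal with $b=2$ — equivalently, that no proper colouring of $\{v_{1,i},v_{2,i}\}$ extending the history fails to extend to a proper colouring of $\{v_{1,i},v_{2,i},v_{1,i+1}\}$ with the same multiset of colours, which holds precisely because $v_{1,i+1}$ has a free colour (its partner $v_{2,i+1}$ has not yet arrived, and it is non-adjacent to $v_{1,i}$). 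Once that non-interference is established, the equality of distributions is a short finite computation.
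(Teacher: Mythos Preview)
Your proposal is correct and rests on the same key observation as the paper: in the alternate order the first and third vertices of the buffer always lie on the same side of the bipartition and are therefore non-adjacent, so the third vertex places no constraint on the colour committed to the first. The paper's own proof is a single sentence to this effect; your version is considerably more careful about the tie-breaking rule in step~\ref{alg:choose} and the possibility of indirect influence through the middle vertex, which the paper leaves implicit.
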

\begin{proof}
The fact that we know two next vertices in a crown graph with an alternate order does not give any additional information, because the last arrived vertex and the next third are always not joined.
\end{proof}

\begin{prop}\label{thm:CRn4}
If $\C_n$, $n\geq 2$, has the alternate order then for Algorithm~\ref{alg:buffer} with the buffer  $b=4$ the number of used colours is always equal to 2.
\end{prop}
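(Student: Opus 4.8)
The plan is to analyze exactly what the buffer of size $4$ "sees" when processing the crown graph $\C_n$ in alternate order, and to show that the colouring rule in steps~\ref{alg:colour}--\ref{alg:choose} of Algorithm~\ref{alg:buffer} is forced to use only colours $1$ and $2$. First I would set up the invariant to be maintained: after each step of the algorithm, all vertices already in $V_c$ are coloured with $1$ or $2$, and moreover the most recently coloured vertex from $V_1$ and the most recently coloured vertex from $V_2$ carry \emph{different} colours. With the alternate order $v_{1,1},v_{2,1},v_{1,2},v_{2,2},\dots$, when the buffer is full it contains four consecutive vertices of the form $v_{1,i},v_{2,i},v_{1,i+1},v_{2,i+1}$ (or a shift thereof); the key combinatorial fact is that $v_{2,i}$ and $v_{1,i+1}$ are adjacent in $\C_n$ (since $i\neq i+1$), whereas $v_{1,i}\not\sim v_{2,i}$ and $v_{1,i+1}\not\sim v_{2,i+1}$. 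So inside the buffer there is a genuine edge linking the two "halves", which is precisely the information the size-$2$ and size-$3$ buffers were missing.

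Next I would check the base case: the first few vertices $v_{1,1},v_{2,1}$ can both be coloured $1$ (they are non-adjacent), and when $v_{1,2}$ enters it is adjacent to $v_{2,1}$, forcing colour $2$ on $v_{1,2}$; this establishes the invariant. For the inductive step, suppose the invariant holds and the buffer is filled with $v_{1,i},v_{2,i},v_{1,i+1},v_{2,i+1}$, where $v_{1,i}$ is the vertex to be committed. We have $v_{1,i}$ adjacent to $v_{2,i-1}\in V_c$; by the invariant $v_{2,i-1}$ has some colour $c\in\{1,2\}$, and also $v_{1,i-1}\in V_c$ has the other colour. The colouring rule must extend to all four buffer vertices using minimal colours, subject to properness with $V_c$. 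I would argue that a proper $2$-colouring of the buffer consistent with $V_c$ always exists: give $v_{1,i}$ the colour $\neq c$, give $v_{2,i}$ the same colour as $v_{1,i}$ (legal, since they are non-adjacent, and $v_{2,i}$'s only neighbour in $V_c$ is $v_{1,i-1}$, which has colour $c$), then $v_{1,i+1}$ is forced to colour $c$ by its edge to $v_{2,i}$, and $v_{2,i+1}$ takes colour $c$ as well. Since a $2$-colouring of the buffer-plus-$V_c$ exists, step~\ref{alg:colour} (colours of minimal value) will never introduce a colour $\geq 3$; hence $v_{1,i}$ is committed with a colour in $\{1,2\}$, and the pair of last-committed vertices from $V_1$ and $V_2$ remains differently coloured, so the invariant is preserved. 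A symmetric argument handles the step where a $V_2$-vertex is at the front of the buffer.

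Finally I would note that the boundary behaviour at the end of the sequence (when fewer than four vertices remain to fill the buffer) does not break anything: the partial buffer still admits the same $2$-colouring extension, so no third colour is ever forced. Combining the base case with the inductive step, every vertex of $\C_n$ is committed with colour $1$ or $2$, and since $\chi(\C_n)=2$ for $n\geq 2$ the count is exactly $2$. The main obstacle I anticipate is being careful about the precise phasing of the buffer contents and making the "last $V_1$-vertex and last $V_2$-vertex differ" invariant airtight through the commit step — in particular verifying that the minimality rule of step~\ref{alg:colour}, together with the tie-breaking in steps~\ref{alg:choose}--\ref{alg:buffer}, cannot be sidetracked into a configuration where the front vertex is forced higher; once the existence of a global $2$-colouring visible within the buffer window is established, everything else is bookkeeping.
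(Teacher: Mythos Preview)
Your inductive strategy is sound and far more detailed than the paper's own proof, which is the single-sentence observation that four consecutive vertices in the alternate order carry enough edges to reveal the bipartition $V_1\cup V_2$. However, the execution has genuine errors. First, the invariant is too weak: you track only the colours of the \emph{most recently} committed $V_1$- and $V_2$-vertices, but a buffer vertex $v_{2,i}$ is adjacent in $\C_n$ to \emph{every} $v_{1,j}$ with $j<i$ in $V_c$, not only to $v_{1,i-1}$ as you claim. To exhibit a proper $2$-colouring of the buffer you therefore need all committed $V_1$-vertices to share one colour and all committed $V_2$-vertices the other; that stronger invariant is what must be carried through the induction.

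Second, the explicit $2$-colouring you build is improper and internally inconsistent. You give $v_{2,i}$ the \emph{same} colour $\bar c$ as $v_{1,i}$, asserting that $v_{1,i-1}$ has colour $c$; but two lines earlier you said $v_{1,i-1}$ has ``the other colour'', i.e.\ $\bar c$, and since $v_{2,i}\sim v_{1,i-1}$ this is a conflict either way. The correct extension assigns $v_{1,i},v_{1,i+1}$ one colour and $v_{2,i},v_{2,i+1}$ the other, in line with the global bipartition. With the strengthened invariant this $2$-colouring is in fact the \emph{unique} one consistent with $V_c$, so steps~\ref{alg:colour}--\ref{alg:choose} of Algorithm~\ref{alg:buffer} select it deterministically; the tie-breaking concern you flag at the end then evaporates, and the induction closes cleanly.
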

\begin{proof}
It easy to observe that the four subsequent vertices in the crown graph with the alternate order give full information that the vertices in the buffer form a bipartite graph.
\end{proof}

Note that if $b=2$ it may occur that Algorithm~\ref{alg:buffer} give the worse colourisation than in the case $b=1$. 
\begin{exm}
In Figure~\ref{fig:cr4} the labels of vertices have the form \textsf{n:L}, where $n$ is the number of a subsequent arriving vertex and the letter $L$ denotes the colour used by Algorithm~\ref{alg:buffer}.
Vertices are coloured by colours $A,B,C,D$.

In Figure~\ref{fig:cr4} the difference in the colouring process with $b=2$ in comparison with the case $b=1$ is such that the second and the third vertex have to obtain different colours.  If the colour $B$ was chosen (with probability $1/2$) for the second vertex then the third vertex has to obtain the colour $A$. A a result the Algorithm~\ref{alg:buffer} has to give colours $C$ and $D$ for the last four vertices.

\begin{figure}[!hbt]
\centering
\includegraphics[width=4.5cm]{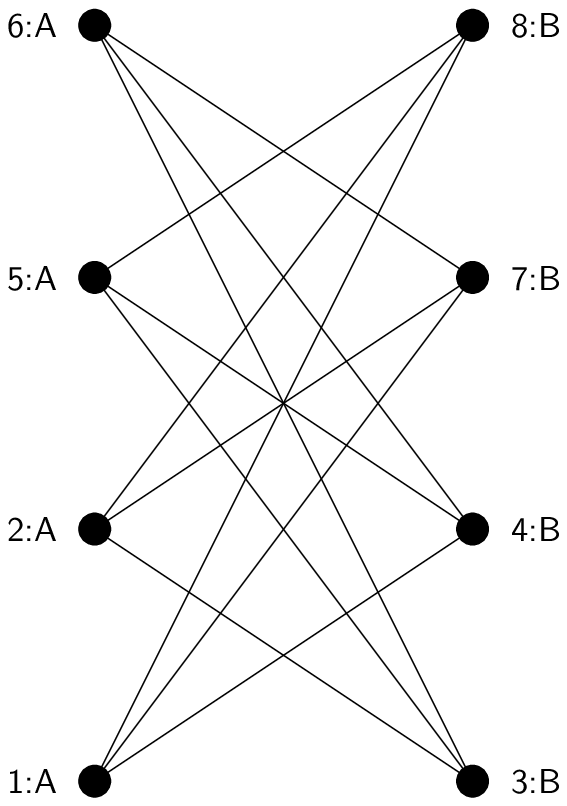}
\includegraphics[width=4.5cm]{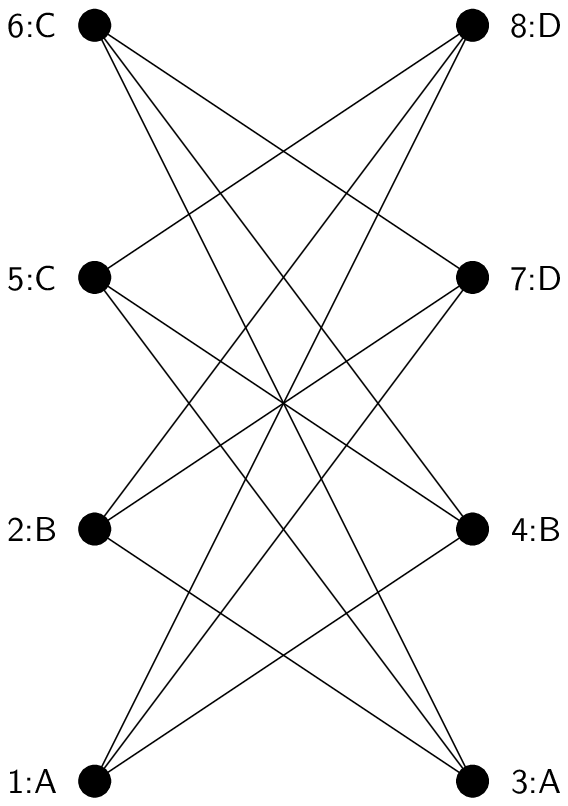}
\caption{\label{fig:cr4}Colouring online for the crown graphs $\C_4$ with buffering: $b=1$ (left) and $b=2$ (right)}
\end{figure}
\end{exm}

\subsection{Kneser graphs}
\label{ss:Kneser}

The vertices of $\K_{n,k}$ are all the $k$-element subsets of $\{1,2,\dots ,n\}$, and an
edge joins vertices $S$ and $T$ if and only if $S\cap T =\emptyset$. 
Such graphs were introduced by J.~Kneser in 1955 -- see~\cite{AZ:Proofs}, Section~38, p.~251.
Kneser conjectured that $\chi\(\K_{n,k}\)=n-2k+2$ for $n\geq 2$. This conjecture is proved by Lov{\'a}sz and with subsequent simpler proofs by B{\`a}r{\`a}ny and Matou{\v{s}}ek -- see~\cite{AZ:Proofs}. 
The class of Kneser graphs contains many familiar classes of graphs. 
\begin{itemize}
\item 
If $k > n/2$, then $\K_{n,k}$ is the empty graph. 
\item 
If $k=1$, then $\K_{n,k} = K_n$, the complete graph on $n$ vertices. 
\item 
$\K_{5,2}$ is the Petersen graph.
\end{itemize}
Miller in~\cite{Miller04onlinegraph} looks at the class of Kneser graphs as an interesting one and still unconsidered.

At first let us consider the simplest example, i.e. Petersen graph, using the greedy algorithm with the buffer of size $b=1$ and $b=2$.
Let us assume that vertices of Petersen graph are numbered in the lexicographic order:
\begin{equation}\label{eq:Petersen_num}
v_1=\{1,2\}, v_2=\{1,3\},\dots,v_{10}=\{4,5\}.
\end{equation}

In the following example we point out that a buffering can both decrease and increase the necessary number of colours. However, as we point out through simulations in Section~\ref{s:simul}, the average number colours used by our algorithm with the buffer of size $b=2$ is a bit smaller than the average number colours given with the buffer of size $b=1$, i.e. without buffering. Therefore we can formulate the following problem.

\begin{prb}
Determine the smallest $b$ that 
\begin{equation}\label{eq:}
\frac{\E C_{n,k}^{\(2\)}}{n-2k+2}-\frac{\E C_{n,k}^{\(b\)}}{n-2k+2}>\delta
\end{equation}
for some fixed $\delta$.
\end{prb}

\begin{exm}
In Figures~\ref{fig:petersen1} and \ref{fig:petersen2} the labels of vertices have the form \textsf{v-n:L}, where
$v$ is the number of vertex in Petersen graph as in Equation~\eqref{eq:Petersen_num}, $n$ is the number of a subsequent arriving vertex and a letter $L$ denotes the colour used by this algorithm.
Let us colour the vertices by colours $A,B,C,D$.

Assume that the order of arriving vertices is $8,1,5,7,6,2,10,4,3,9$. In Figure~\ref{fig:petersen1} the difference in the colouring process with $b=2$ in comparison with the case $b=1$ is such that if the present vertex is 6, we also know that the next vertex is 2. Since these vertices have to obtain different colours, we have two possibilities according to point~\ref{alg:colour} of Algorithm~\ref{alg:buffer}:
\begin{enumerate}
\item 
$\col\(5\)=A$, $\col\(2\)=C$ (as in the case $b=1$),
\item 
$\col\(5\)=B$, $\col\(2\)=A$.
\end{enumerate}
According to point~\ref{alg:choose} we choose the second possibility. Therefore, using the buffer of size 2, we can paint Petersen graph using three instead  four colours.

Using the buffer of the size two at least, we do not always obtain a better result.
Let as assume that the order of arriving vertices is $9,7,5,8,1,6,3,2,4,10$. In the case colouring with the buffer of size $b=1$ gives three colours. In the case $b=2$, if we colour the fourth vertex $v_8$ and we know that the next vertex is $v_1$, then we have to colour these vertices by $B$ and $C$. If we decide (with probability $1/2$) that $\col\(v_8\)=C$, then finally we must use four colours to paint Petersen graph instead of three colours. Such a case is presented on Figure~\ref{fig:petersen2}.

\begin{figure}[!hbt]
\centering
\includegraphics[width=6.5cm]{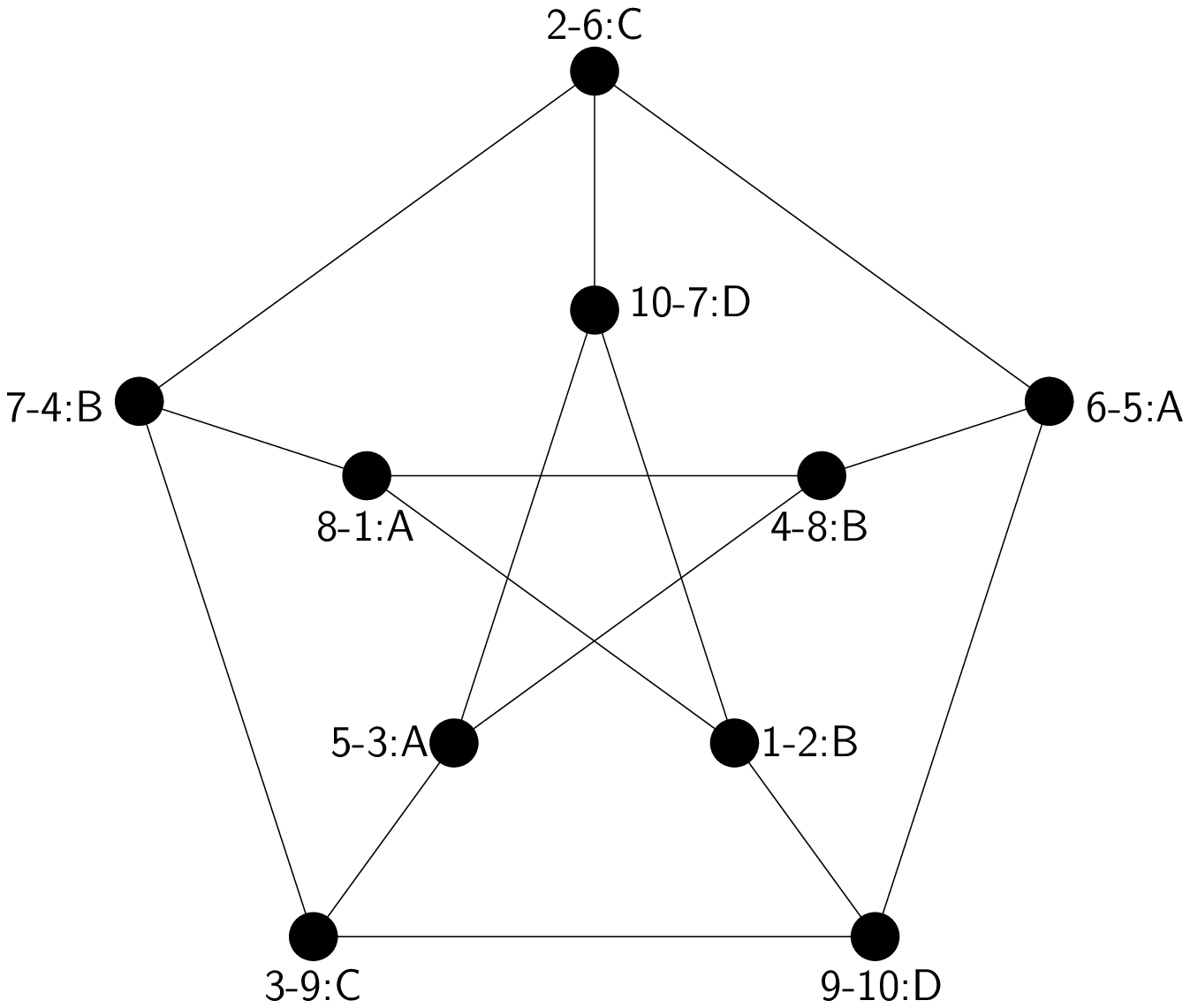}
\includegraphics[width=6.5cm]{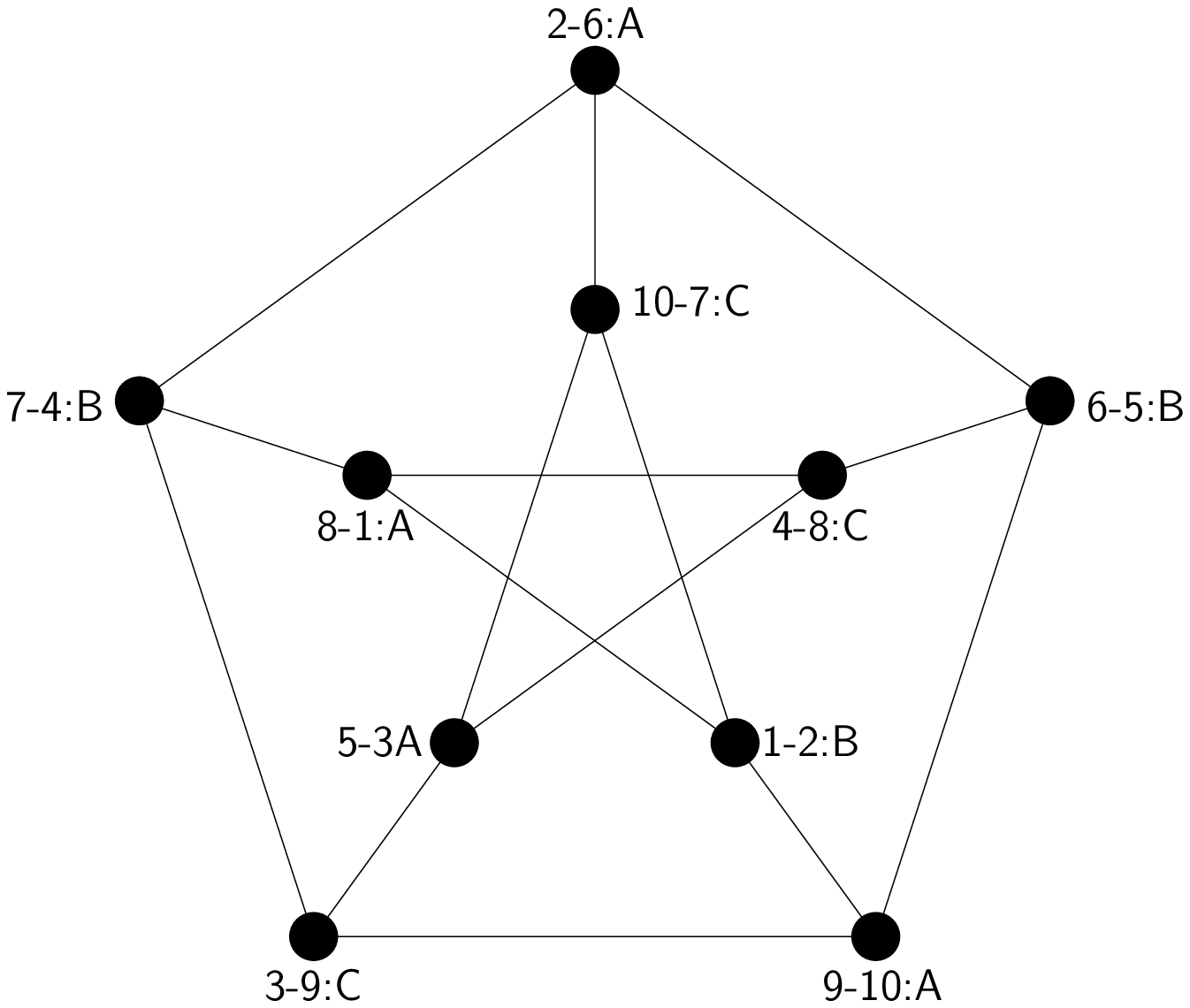}
\caption{\label{fig:petersen1}Colouring online for Petersen graphs with buffering: $b=1$ (left) and $b=2$ (right) and ordering $8,1,5,7,6,2,10,4,3,9$}
\end{figure}

\begin{figure}[!hbt]
\centering
\includegraphics[width=6.5cm]{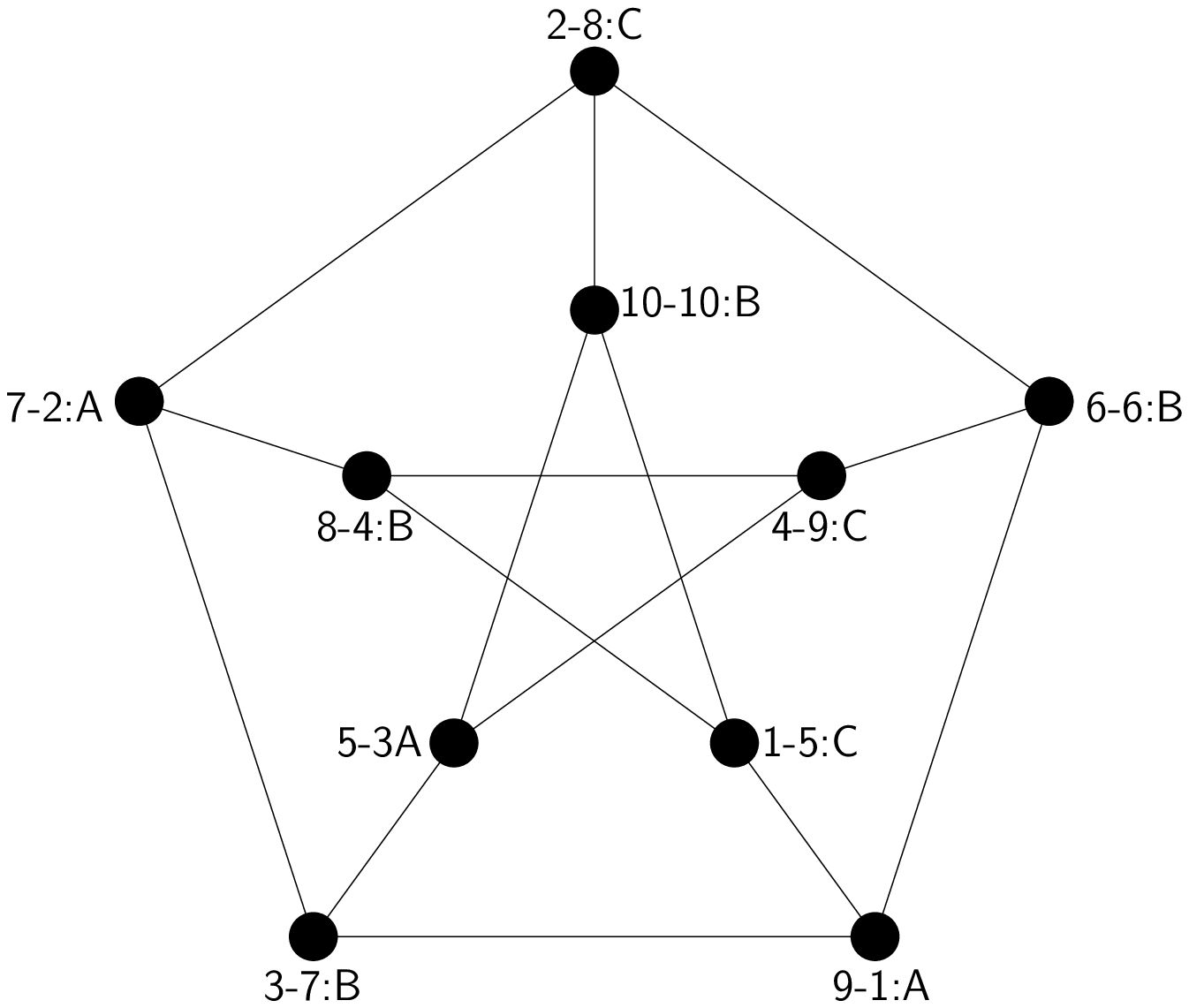}
\includegraphics[width=6.5cm]{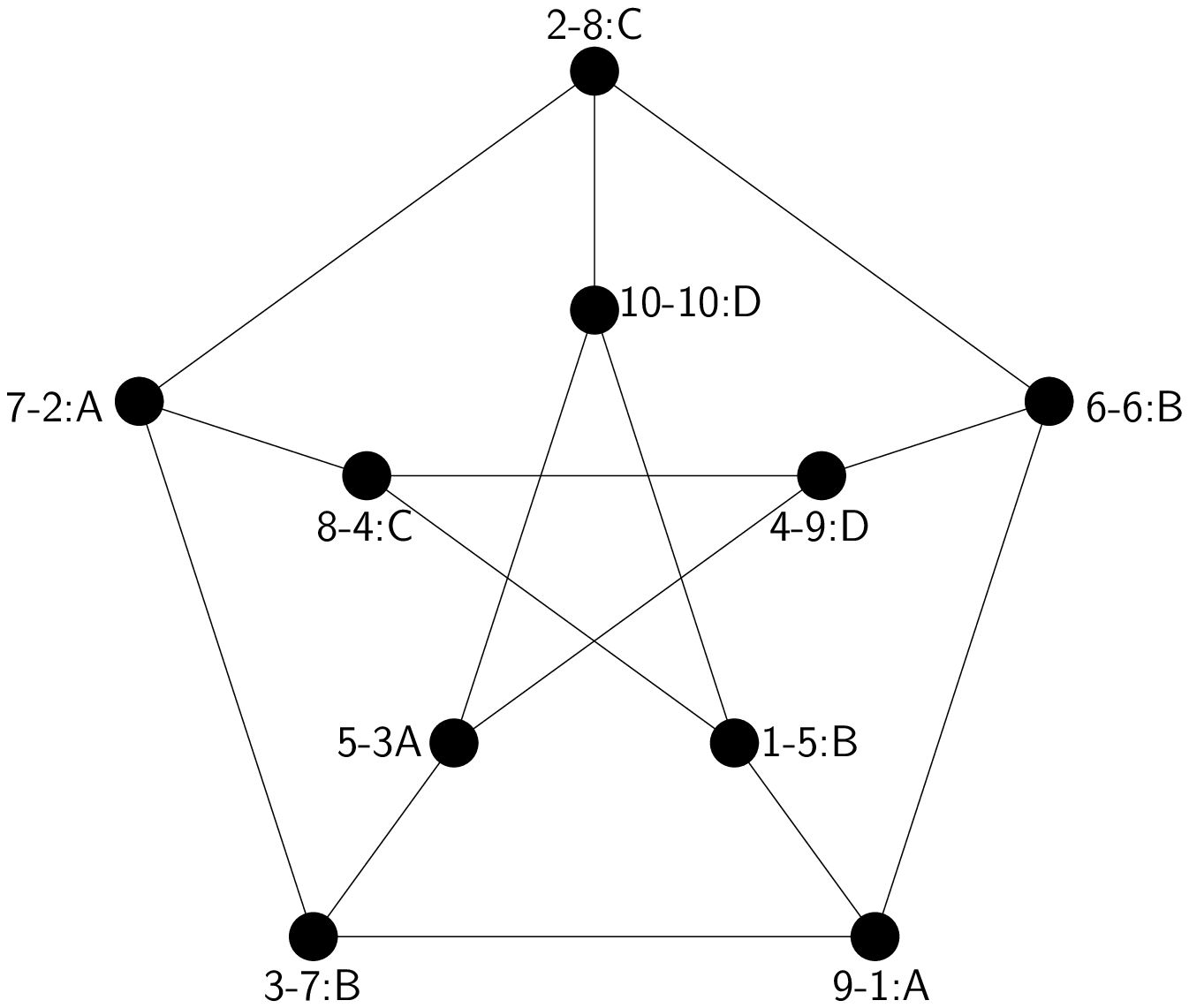}
\caption{\label{fig:petersen2}Colouring online for Petersen graphs with buffering: $b=1$ (left) and $b=2$ (right) and ordering $9,7,5,8,1,6,3,2,4,10$}
\end{figure}

\end{exm}

\section{Simulations}
\label{s:simul}

Simulations for crown graphs and Kneser graphs used $20\,000$ repetitions\footnote{The computer program was written in Pascal using Lazarus environment and the standard random number generator.}.
Taking into account of theorems~\ref{thm:CRn2_E}--\ref{thm:CRn4} we carried out the simulations for crown graphs only for $b\leq 2$. 
In the both considered cases the maximal number of vertices is equal to 200.

\begin{table}[!htb]
\begin{center}
\caption{\label{tab:simCrown}The average number of colours: result of simulations for random order of $\C_n$}

\smallskip
\begin{tabular}{r|r|r|r}
$n$ & $b=2$ & $b=1$ \\
\hline 
4 & 2.24 & 2.32 \\
6 & 2.15 & 2.20 \\
10 & 2.08 &  2.12 \\
20& 2.04 & 2.05 \\
50& 2.02 & 2.02 \\
100& 2.01 & 2.01 \\
\end{tabular}
\end{center}
\end{table}

For Kneser graphs, simulations were carried out for $5\leq n\leq 10$, $k\geq 2$ and $b\leq 2$.
The result of simulations for $b=1$ and $b=2$ is given in Table~\ref{tab:simKneser1}.
Simulations for $b=2$ give smaller but almost the same results as in the case when $b=1$.

\begin{table}[!hbt]
\caption{\label{tab:simKneser1}
The average number of colours: result of simulations for random order of $\K_{n,k}$}

\smallskip
\begin{center}
\begin{tabular}[t]{r|cc|cc|cc}
& 
\multicolumn{2}{c|}{$k=2$} &  
\multicolumn{2}{c|}{$k=3$} &
\multicolumn{2}{c}{$k=4$} \\\cline{1-7}
{\footnotesize \ukosna{c}{25}{15}{0}$b$\t1 $n$\t2}
& $1$ & $2$ & $1$ & $2$ & $1$ & $2$ 
\\\hline
 5  &3.13&3.10 && \\
 6  &4.28&4.23 && \\
 7  &5.44&5.37  &3.93&3.918 \\
 8  &6.58&6.51  &5.69&5.65 \\
 9  &7.70&7.64  &7.35&7.30  &4.89&4.88 \\
10  &8.81&8.74  &8.93&8.88  &7.52&7.49 \\
\end{tabular} 
\end{center}
\end{table}

\bibliographystyle{abbrv}
\bibliography{CID}

\end{document}